\documentclass[journal]{article}                                  

\usepackage{arxiv}
\usepackage{hyperref}
\usepackage{graphicx}          
\usepackage{amsmath} 
\usepackage{amssymb}  

\usepackage{amsthm}
\usepackage{color,soul}
\usepackage{flushend}
\usepackage{cite}
\usepackage{mfirstuc}
\usepackage{subfig}
\usepackage{mathtools}
\usepackage{xcolor}
\usepackage{array,multirow}
\usepackage{algorithm} 
\usepackage{algorithmic}  

\usepackage[linesnumbered,algo2e,ruled,vlined,norelsize]{algorithm2e} 
\usepackage[bb=libus]{mathalpha}
\usepackage{bbm}

\allowdisplaybreaks
\newtheorem{thmm}{Theorem}

\newtheorem{remm}{Remark}

\newtheorem{Lemm}{Lemma}

\usepackage{tikz}
\usepackage{textcomp}
\usepackage{hyperref}
\usepackage{lipsum}

\newcommand\copyrighttext{%
  \footnotesize \textcopyright 2025 IEEE. This work has been accepted to the 2026 American Control Conference (ACC). Personal use of this material is permitted.
  Permission from IEEE must be obtained for all other uses, in any current or future 
  media, including reprinting/republishing this material for advertising or promotional 
  purposes, creating new collective works, for resale or redistribution to servers or 
  lists, or reuse of any copyrighted component of this work in other works. 
  DOI: \href{<http://tex.stackexchange.com>}{To be generated.}}
\newcommand\copyrightnotice{%
\begin{tikzpicture}[remember picture,overlay]
\node[anchor=south,yshift=10pt] at (current page.south) {\fbox{\parbox{\dimexpr\textwidth-\fboxsep-\fboxrule\relax}{\copyrighttext}}};
\end{tikzpicture}%
}

\usepackage{authblk}
\title{\LARGE \bf
KAN-Koopman Based Rapid Detection Of Battery Thermal Anomalies With Diagnostics Guarantees
}

\author[1]{Sanchita Ghosh}
\author[1]{Tanushree Roy}
\affil[1]{Department of  Mechanical Engineering, Texas Tech University, Lubbock, TX 79409, US. Emails:~{\tt\small sancghos@ttu.edu, tanushree.roy@ttu.edu}.}

\begin{document}

\copyrightnotice

\maketitle
\thispagestyle{empty}
\pagestyle{empty}


\begin{abstract}
 Early diagnosis of battery thermal anomalies is crucial to ensure safe and reliable battery operation by preventing catastrophic thermal failures. Battery diagnostics primarily rely on battery surface temperature measurements and/or estimation of core temperatures. However, aging-induced changes in the battery model and limited training data remain major challenges for model-based and machine-learning based battery state estimation and diagnostics. To address these issues, we propose a Kolomogorov-Arnold network (KAN) in conjunction with a Koopman-based detection algorithm that leverages the unique advantages of both methods. Firstly, the lightweight KAN provides a model-free estimation of the core temperature to ensure rapid detection of battery thermal anomalies.  Secondly, the Koopman operator is learned in real time using the estimated core temperature from KAN and the measured surface temperature of the battery to provide the core and surface temperature prediction for diagnostic residual generation. This online learning approach overcomes the challenges of model changes.
 Furthermore, we derive analytical conditions to obtain diagnostic guarantees on our KAN-Koopman detection scheme. Our simulation results illustrate a significant reduction in detection time with the proposed algorithm compared to the baseline Koopman-only algorithm.

\end{abstract}

\section{Introduction} 
Thermal safety of batteries remains a major concern with increasing deployment of batteries in diverse applications ranging from consumer electronics to electric vehicles \cite{klink2021thermal}. Thermal anomalies in the battery can  lead to overheating, accelerated degradation, or even uncontrollable thermal runaway, resulting in fire, venting, or electrolyte leakage \cite{zhou2023chebyshev,wei2019lyapunov}. Therefore, early detection of thermal anomalies is crucial to prevent catastrophic failures \cite{dey2017model}.
Consequently, \cite{dey2017model} proposed a 1D distributed parameter model-integrated observer for thermal fault detection and estimation in a cylindrical battery. 
Similarly, \cite{son2019model} proposed a two-state thermal model-based algorithm to identify thermal faults. 
Furthermore, \cite{sattarzadeh2021thermal}  optimized sensor location to improve thermal fault detectability and isolability  in a pouch cell and proposed a bank of diagnostic filters for each sensor zone. \cite{zhou2023chebyshev} adopted a Chebyshev-Galerkin-based model to construct the spatial basis function with time coefficients to detect and localize thermal faults. \cite{klink2021thermal} utilizes the temperature-dependent deviation of cell-impedance for early thermal fault detection. Moreover, \cite{wei2019lyapunov}, proposed a thermal fault detector that uses battery internal resistance estimation and surface temperature measurements. A  set-based fault detection using constrained zonotopes was proposed in \cite{saccani2022model}. However, the effectiveness of model-based methods heavily depends on parameter reliability, and such methods exhibit a high computational complexity \cite{richardson2014battery}.

To overcome this limitation, \cite{ojo2020neural} proposed a long short-term memory (LSTM)-based neural network (NN) model that utilizes surface temperature for thermal fault detection. Moreover, \cite{sun2022online} adopted the discrete Fréchet distance and local outlier factor to detect and identify faulty battery cells. Additionally, physics-informed NN model \cite{naguib2025thermal} and Principal Component Analysis with mean-based residual generation \cite{bhaskar2023data} have been adopted to detect battery thermal anomalies.  Nevertheless, the data-driven approaches require a large amount of data and frequently become inefficient for unforeseen anomalies \cite{sattarzadeh2021thermal}. Additionally, both model-based and data-driven methods exhibit poor adaptability with varied operating conditions and battery aging \cite{GHOSH2025AE}. 

Authors proposed a model-free Koopman operator-based detection of internal short circuit- \cite{ghosh2025koopman} and corrupted charging- \cite{GHOSH2025AE} induced thermal anomalies in the battery. In addition, the proposed algorithms adopt an online-learning framework that utilizes the limited available data and exhibits an improved adaptability with changes in battery dynamics. However, the methods rely on battery voltage and surface temperature measurements, which can lead to delayed detection of thermal anomalies \cite{karnehm2025core}. Furthermore, while incorporating core temperature information can significantly aid in early detection of thermal anomalies, online core temperature measurement, even with embedded micro-sensors, remains impractical during real-world applications due to its intrusive nature and manufacturing limitations \cite{richardson2014battery,sattarzadeh2021real}. To address these research gaps,  main contributions of this paper are:
\begin{enumerate}
    \item We propose a model-free Kolmogorov-Arnold network (KAN) integrated Koopman-based diagnostic algorithm for the detection of battery thermal anomalies.
        \item We incorporate KAN-based core temperature estimation to ensure faster detection of thermal anomalies.
    \item  We derive analytical conditions to obtain  diagnostic guarantees for our proposed method. 
\end{enumerate}

The rest of the paper is organized as follows. Section~\ref{prel} presents a brief overview of the KAN network and the Koopman operator, and Section~\ref{prob stat} describes our problem framework. In Section~\ref{KK_detection}, we introduce our proposed KAN-Koopman diagnostic algorithm.
The simulation results are presented in Section~\ref{sim_result}. Finally, Section~\ref{conclusion} concludes the paper. 
\section{Preliminaries } \label{prel}

\subsection{Kolomogorov-Arnold network}\label{kan_preli}
Kolmogorov-Arnold network (KAN) has emerged as a promising deep learning model that offers enhanced robustness and precision with a lightweight structure \cite{liu2024kan}. A NN utilizes learnable edge weights and fixed nonlinear activation functions on the output nodes for effective learning. In contrast,  KAN employs learnable nonlinear activation functions on the edges that are added at output nodes for improved nonlinearity representation.
Let us consider a $D-$deep KAN network such that the $d^{th}$ KAN layer $\forall d \in \{1, \cdots, D\}$ has $\omega_d$ input nodes (a width of $\omega_d$) and $\omega_{d+1}$ output nodes, and these output nodes become the input nodes for the next layer $d+1$.   The $a^{th}$ input node and the $b^{th}$ output node in layer $d$, respectively, represent  single variables $\alpha_{d,a}$ and $\alpha_{d+1,b}$. Then, $ \forall a \in\{1,\cdots, \omega_d\} $ and $ \forall b \in\{1,\cdots, \omega_{d+1}\}$,   $\alpha_{d,a}$  pass through the activation functions $\varphi_{d,b,a}$ and then they are added to get  the output variables $\alpha_{d+1,b}$, i.e.\, $\alpha_{d+1,b}=\sum_{a=1}^{\omega_d}\varphi_{d,b,a} (\alpha_{d,a})$. 
 In a compact matrix form, we obtain: 
\begin{align}
    & \boldsymbol{\alpha}_{d+1} =  {\phi_{d}} \boldsymbol{\alpha}_{d}, \quad \text{where} \,\,\, \boldsymbol{\alpha}_{(.)}\! = \!\begin{bmatrix}
        \alpha_{(.),1} \!\! & \!\! \cdots  \!& \!\! \alpha_{(.),\omega_{(.)}}
    \end{bmatrix}^T, \label{KAN x} \\
    & \qquad \,\, {\phi_{d}} := \begin{pmatrix}
\varphi_{d,1,1} \!\!& \!\! \dots \!\!& \!\! \varphi_{d,1,\omega_d}\\
\vdots \!\!& \!\! \ddots \!\!& \!\! \vdots\\
\varphi_{d,\omega_{d+1},1} \!\!& \!\! \dots \!\!& \!\! \varphi_{d,\omega_{d+1},\omega_d}
    \end{pmatrix}. \label{KAN matrix}
\end{align}
    $\phi_d$ is the activation function matrix at layer $d$. 
    Following \cite{liu2024kan}, to represent the activation functions in $\phi_d$, we consider a weighted sum of a silu basis function and a spline function that is learn as a linear combination of $\kappa$-order B-splines over $G+1$ grid points ($G$ intervals). 
    During approximating a function with such a KAN network, we can find a bound for error using the approximation theory (Theorem 2.1) presented in \cite{liu2024kan}. We briefly restate the theorem below.

        \noindent
    \begin{thmm}[Approximation Theory \cite{liu2024kan}] \label{KAN aprox thm}
        Let us consider a bounded multi-variate function $f(\boldsymbol{\alpha})$ for $\boldsymbol{\alpha} = (\alpha_1, \cdots, \alpha_n)$ that can be expressed with finite additive compositions of $\kappa-1$ times differentiable univariate functions $\varphi_{d,b,a}$ such that $f(\boldsymbol{\alpha}) = (\phi_D \circ \phi_{D-1}\circ \cdots \circ \phi_1) \boldsymbol{\alpha}$, where the function matrix $\phi_d,\, \forall \in \{1, \cdots, D\}$ is defined in \eqref{KAN matrix}. Let us also consider a $D-$layer KAN network to approximate this function $f(\boldsymbol{\alpha})$. Then, there exists  $\kappa^{th}$ order B-spline functions $\varphi_{d,b,a}^G$ learned over a finite grid size $G$ such that        the bound on approximation error for the KAN network is:
        \begin{align}
            \lvert f - (\phi_D^G \circ \phi_{D-1}^G\circ \cdots \circ \phi_1^G) \rvert_\infty \leqslant \mathcal{D} G^{-\kappa-1}, \label{error_bound}
        \end{align}
        where, the constant $\mathcal{D}$ depends on the function and its representation, and $\lvert \cdot \rvert_\infty$ denotes the $l_\infty$ norm.
 \end{thmm}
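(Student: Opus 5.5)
We prove the bound by induction over the layers, combining a one-dimensional spline approximation estimate with a telescoping argument along the composition. Define the exact intermediate variables $\boldsymbol{\alpha}_{d+1} = (\phi_d\circ\cdots\circ\phi_1)\boldsymbol{\alpha}$ and write $\boldsymbol{\alpha}_1=\boldsymbol{\alpha}$. Since $f$ is bounded and each $\varphi_{d,b,a}$ is continuous on a compact input set, every $\boldsymbol{\alpha}_d$ ranges over a bounded box $\mathcal{B}_d$. We place the uniform grid of $G$ intervals on a slightly enlarged box $\mathcal{B}_d^\epsilon$, so that perturbed inputs stay inside the grid for $G$ large.

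\emph{Step 1: univariate estimate.} By classical spline approximation theory (de Boor), for each $\varphi_{d,b,a}\in C^{\kappa+1}$ on $\mathcal{B}_d^\epsilon$ there is a $\kappa$-order B-spline $\varphi^G_{d,b,a}$ on the grid of spacing $h\sim G^{-1}$ such that, for $0\le m\le \kappa$,
\begin{align}
\|(\varphi_{d,b,a}-\varphi^G_{d,b,a})^{(m)}\|_\infty \le C\, G^{-\kappa-1+m}. \nonumber
\end{align}
The constant $C$ depends on $\|\varphi_{d,b,a}^{(\kappa+1)}\|_\infty$ and on the box length. We need the case $m=0$ for the error bound and the case $m=1$ to control Lipschitz constants. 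This is where the regularity hypothesis is genuinely used. The order $G^{-\kappa-1}$ requires $\kappa+1$ derivatives, so I would interpret the smoothness assumption in the statement to be sufficient for this. Making that interpretation consistent is the main obstacle I expect.

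\emph{Step 2: telescoping.} Set $\Phi_{d}^G=\phi_D^G\circ\cdots\circ\phi_{d}^G$, and write $f-\Phi^G_1$ as the sum over $\ell=1,\dots,D$ of
\begin{align}
\Phi^G_{\ell+1}\big(\phi_\ell\boldsymbol{\alpha}_\ell\big)-\Phi^G_{\ell+1}\big(\phi^G_\ell\boldsymbol{\alpha}_\ell\big), \nonumber
\end{align}
where $\Phi^G_{D+1}$ is the identity. Each term is bounded by $\mathrm{Lip}(\Phi^G_{\ell+1})\,\|(\phi_\ell-\phi_\ell^G)\boldsymbol{\alpha}_\ell\|_\infty$. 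Each component of the second factor is a sum of at most $\omega_\ell$ univariate errors, so by Step 1 this factor is at most $\omega_\ell C G^{-\kappa-1}$. The Lipschitz constant of $\phi^G_d$ in $\ell_\infty$ is at most $\omega_d\max\|(\varphi^G_{d,b,a})'\|_\infty$. By Step 1 with $m=1$, this is at most $\omega_d(\max\|\varphi_{d,b,a}'\|_\infty + C G^{-\kappa})$, which is uniformly bounded in $G$.

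Multiplying these bounds, each of the $D$ terms is $O(G^{-\kappa-1})$, with a constant depending only on the widths $\omega_d$, the derivative norms of the $\varphi_{d,b,a}$, and the boxes. Collecting them gives $\mathcal{D}$ and hence the bound \eqref{error_bound}. A minor technical step is to check that the intermediate arguments $\phi_\ell^G\boldsymbol{\alpha}_\ell$ remain in $\mathcal{B}^\epsilon_{\ell+1}$, where the spline estimates hold. This follows for $G\ge G_0$ because the perturbation is $O(G^{-\kappa-1})$, and for $G<G_0$ the bound is absorbed by enlarging $\mathcal{D}$.
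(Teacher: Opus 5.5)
The paper gives no proof of this theorem (it is quoted from Liu et al.\ \cite{liu2024kan}), and your argument is essentially theirs: classical one-dimensional B-spline estimates followed by a layer-by-layer telescoping sum of residues, except that you make explicit the uniform-in-$G$ Lipschitz control of the spline layers (via the $m=1$ estimate) and the enlarged-box bookkeeping, which Liu et al.\ leave implicit. Your reading of the regularity hypothesis is the right one: Liu et al.\ assume each $\varphi_{d,b,a}$ is $(\kappa+1)$ times continuously differentiable, and the paper's ``$\kappa-1$ times differentiable'' is a misquotation under which the $G^{-\kappa-1}$ rate fails in general, so you should state the $C^{\kappa+1}$ assumption outright rather than treat it as an obstacle to reconcile.
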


We will use the findings from this theorem later for our proposed algorithm.

\subsection{Koopman operator}\label{KO_preli}
 Koopman operator (KO) is a powerful tool for model-free analysis of dynamical systems with uncertainty, nonlinearity, and complexity using computationally efficient algorithms and limited data \cite{budivsic2012applied}. 
Let us define the nonlinear dynamics of the system as $ {x}_{k+1} = g(x_k,u_k)$ with measurements $y_k = h(x_k)$. Here $x  \in  \mathbb{R}^i$, $u  \in \mathbb{R}^j$, and $y \in \mathbb{R}^l$  are, respectively,  state vector, control input,  and measured output. $h : \mathbf{X} \rightarrow \mathbb{R}^l$ denotes the nonlinear output function, and the system dynamics $g: \mathbb{R}^i \times \mathbb{R}^j \rightarrow \mathbb{R}^i$ is a continuously differentiable nonlinear function. 
To define KO to approximate this system, let us consider an infinite-dimensional Hilbert space of observable functions $\mathcal{F}$. Then,  the set of KO on this space, $\mathcal{K} : \mathcal{F} \rightarrow \mathcal{F}$ is defined using the complex-valued Koopman eigenfunctions (KEF), $\theta \in \mathcal{F}$ such that $\theta :\mathbb{R}^i \rightarrow \mathbb{C}$, and the corresponding Koopman eigenvalues (KEV), $\lambda \in \mathbb{C}$, as \cite{budivsic2012applied}:
\begin{align} \label{kt_phi}
    \left[ \mathcal{K} \theta \right] (x_k,u_k) = \theta \left(g(x_k,u_k) \right) = \lambda  \theta (x_k,u_k); 
\end{align}
We assume that the space $\mathcal{F}$ is spanned by infinitely many KEFs  as $\mathcal{F} =\text{span} \{\theta_p \}_{p = 1}^\infty$ and the output function $h(x_k)$ also lies on $\mathcal{F}$.  Then  $h(x)$  can be expanded in terms of KEFs as $
    h(x_k) = \sum_{p=1}^{\infty} \theta_p (x_k,u_k) v_p^h$ \cite{surana2016linear}.
Here the Koopman Modes (KM) $v_p^{h} \in \mathbb{C}^l$ are the projection coefficients of ${h}(x)$ onto the space $\mathcal{F}$. 
Next, we utilize the delay embedding to obtain the finite approximation of the linear model as ${z}_{k+1} = Az_k + Bu_k$; $ y_k = C z_k$ \cite{budivsic2012applied}.
Here, $ z_k$, $A, B$, and $C$ contain, respectively,  the KEFs $\theta_i$s,  KEVs $\lambda_i$s, and  KMs $v^h_i$s.
We deploy delay embedding over a moving learning window of $W_l$ observations to approximate the Koopman model, and  generate predictions for the next $W_p$ observations, such that the generated predictions for $y_k$ and the evolved KEFs ${z}_k$ over $W_p $ remain bounded. The windows slide ahead with $W_p$ amount after each cycle of prediction. In particular, we rearrange the available data during the learning window to obtain a Hankel matrix $X_b$ using the first $W_l-1$ observations, a shifted matrix $X_s$ using the last $W_l-1$ observations, and an input matrix $U_b$ (detailed formulation of these matrices can be found in \cite{GHOSH2025AE}). 
 Then, the Koopman model  can be approximated solving the optimization problems posed as:
 \vspace{-1mm}
\begin{align} \label{ko opt}
    \min\limits_{\Lambda} \Vert X_{s} -  \, \Lambda  \Upsilon \Vert, \qquad 
   \min\limits_C \Vert Y_b -  \, C X_b \Vert.
\end{align} \
Here,  $\Upsilon^T \!\!= \!\begin{bmatrix}
    X_{b} \!&\!
    U_b
\end{bmatrix}$,  $ \Lambda \!=\! \begin{bmatrix}
    A \!&\! B
\end{bmatrix}$.
With these preliminaries on KAN and KO, next, we discuss our problem framework.

\section{Thermal Anomalies in Battery} \label{prob stat}
Thermal anomalies in a battery cause unanticipated heat generation, leading to overheating or even catastrophic thermal runaway failure. In our framework, we consider two classes of anomalies for such abnormal thermal behavior.

\noindent
\textbf{(I.) Physical faults:} Several factors such as short circuits, cell-reactions, electrolyte leakage, and manufacturing malfunctions can result in thermal anomalies in the battery \cite{sattarzadeh2021thermal}.

\noindent
\textbf{(II.) Compromised charging:} An adversary can corrupt the charging command from cloud-BMS, leading to excessive and unwanted ohmic heat generation in the battery \cite{GHOSH2025AE}.

\noindent
Mathematically, we model the battery thermal dynamics under such anomalies as follows.
\begin{align} 
    & T_{1_{k+1}}  = T_{1_{k}} +\Delta t\left[ -\frac{T_{1_{k}}-T_{2_{k}}}{R_1\,C_1} + \frac{\dot{Q}_{k}}{C_1} + \frac{\Bar{\delta}_{1_k}}{C_1}\right], \label{core temp}\\
    & T_{2_{k+1}}  = T_{2_{k}}+ \Delta t \left[-\frac{T_{2_{k}}-T_{1_{k}}}{R_1\,C_2} - \frac{T_{2_{k}}-T_{\infty_{k}}}{R_2\,C_2}\right], \label{surf temp}\\
   & T_{\infty_{k+1}}  = T_{\infty_{k}}+\Delta t\left[-\frac{T_{\infty_{k}}-T_{2_{k}}}{R_2\,C_{\infty}} - \frac{\dot Q_{c_{k}}}{C_{\infty}}\right], \label{amb temp} \\
    & \dot{Q}_k = \big(I_k+\delta_{2_k} \big)\big(V_{O_k}(\mathbf{s_k}) -V_{t_k}-T_{1_k}\gamma\big). \label{qdot dynamics}
\end{align}
Here,  $T_1, T_2, T_\infty$ are the temperatures and $C_1, C_2, C_\infty$ are the heat capacities of the battery core, surface, and coolant materials, respectively. $R_1$  and $R_2$ denote the thermal resistances, respectively, between the core \& surface, and the surface \& coolant. $\dot{Q}$ is the internal heat generation, $\dot{Q}_c$ is the coolant power,  and $\gamma$ is the entropic heat coefficient. Moreover, $I$ is the charging current, $\mathbf{s}$ is the state-of-charge (SOC), $V_O$ is the open-circuit voltage, and $V_t$ is the terminal voltage of the battery.  $\Bar{\delta}_1$ and $\delta_2$ denote thermal anomalies, where $\delta_2$ can be induced from either a physical fault or a cyberattack, and only a physical fault can induce $\Bar{\delta}_1$. $\Delta t$ is the sampling time. Battery electrical dynamics is defined as
\vspace{-1mm}
\begin{align} 
   \mathbf{s}_{k+1}  = \mathbf{s}_{k}-\frac{\Delta t \Tilde{I}_{k}}{C_b}, \quad \,\, V_{t_k}  = V_{O_k}(\mathbf{s_k}) - \Tilde{I}_kR_b. \label{vt dynamics}
\end{align}
Here,   $\Tilde{I}_k = I_k+\delta_{2_k}$ is corrupted charging actuation, $C_b$ is the discharge capacity, and $R_b$ is the internal resistance of the battery. Furthermore, substituting $V_t$ in \eqref{qdot dynamics}, we get $\dot{Q}_k =  \Tilde{I}_k^2R_b- \Tilde{I}_k\,T_{1_k} \gamma$. 
We consider that temperature measurements of the surface  $T_2$ and  ambient $T_\infty$, along with the nominal  current input $I$, and coolant input $\dot{Q}_c$ are available to us.

\section{KAN-Koopman Diagnostic Algorithm}\label{KK_detection}
The KAN-Koopman-based detection algorithm has three main steps: (i) Estimation of core temperature using an offline-trained KAN network. (ii) Predictions of core and surface temperature using an online-learned Koopman model. (iii) Generation of diagnostic residual as Koopman prediction errors to detect thermal anomalies. The  algorithm addresses the challenges in battery thermal diagnostics as follows.

\textbf{(I) Model-free approach:}  Our proposed algorithm adopts a data-driven  approach to detect thermal anomalies using only the available data ($T_2, T_\infty, I, \dot{Q}_c$), and without relying on the  prior model knowledge of battery thermal dynamics.

\textbf{(II) Enhanced thermal information with KAN-based $T_1$ estimation:} The algorithm trains a KAN network to effectively learn intrinsic thermal patterns offline, where  $T_1$ measurements can be accessed using a lab experimental set-up or high-fidelity simulations. The trained KAN model is then deployed to generate  $T_1$ estimations online.

\textbf{(III) Adaptive residual generation with Koopman model: }
The offline-trained KAN model fails to account for unforeseen changes in operating conditions. Koopman module, on the other hand, learns the variability in battery dynamics from $T_2$ measurements, thereby overcoming the poor adaptability of the KAN network.
Thus, the proposed KAN-Koopman diagnostic algorithm leverages the unique advantages of each method to ensure rapid detection of thermal anomalies with improved adaptability to changes in battery dynamics.

\subsection{KAN-based estimation} To estimate the core temperature $T_1$, we adopt the KAN-Therm model proposed in our previous work \cite{mallick2025kan}. The proposed KAN model uses four input features as:
\begin{align} \label{T1 KAN}
    \mathbb{T}_{1_k} = \text{KAN}\left( \boldsymbol{\alpha}_k\right), \,\,\, \boldsymbol{\alpha}_k = \begin{bmatrix}
   T_{2_k} & T_{\infty_k} & I_k & \dot{Q}_{c_k}
\end{bmatrix}^T.
\end{align}
Here, $\mathbb{T}_{1_k}$ is the estimation of  ${T}_1$ at $k^{th}$ instant, and the estimation error   $e_k = \mathbb{T}_{1_k} - T_{1_k}$.
From \eqref{core temp}, we can rewrite the core temperature dynamics as $ T_{1_{k}} = F(\boldsymbol{\alpha}_{k}),$  where $F(\boldsymbol{\alpha}_{k})$ is the solution of \eqref{core temp}. Since the underlying battery dynamics is Lipschitz continuous, we can find a  Lipschitz constant $L_F \in \mathbb{R}^+$, such that $F$ is Lipschitz continuous in $\boldsymbol{\alpha}_k$: 
\begin{align}\label{F lipschtiz}
   \Vert F(\boldsymbol{\alpha}_k)\! - \!F(\boldsymbol{\alpha}_k + \Delta \boldsymbol{\alpha}_k) \Vert \!\leqslant \! L_F\Vert  \Delta \boldsymbol{\alpha}_k\Vert, \,\,\, \forall k.
\end{align}
Under thermal anomalies, the true $T_{1_{k}} = F( \boldsymbol{\alpha}_k +\delta_k) +  \delta_{1_k}$, whereas the KAN network estimates $\mathbb{T}_{1_k}$ using 
      $\boldsymbol{\alpha}_k$. Here, $\delta_k := \begin{bmatrix}
  0 & \!\!\! 0 & \!\! \delta_{2_k}  & \!\! \!0 
\end{bmatrix}^T\!\!\neq 0$ and $\delta_{1_k}\neq 0$ is a function of thermal fault $\Bar{\delta}_{1_k}$.
Thus, we can write $e_k$ in terms of $\delta_{k}$ and $\delta_{1_k}$  as
\vspace{-1mm}
    \begin{align}\label{corrupted KAN error}
    \Vert e_k(\delta_{k},\delta_{1_k})  \Vert   = \Vert   \text{KAN}\left(\boldsymbol{\alpha}_k \right) -F(\boldsymbol{\alpha}_{k} + \delta_{k}) -   \delta_{1_k} \Vert.
\end{align}
Under nominal battery operation $\Vert \delta_{k} \Vert = 0$ and ${\delta}_{1_k} = 0$, the estimation error  $e_k (\textbf{0},0) = \mathbb{T}_{1_k} - T_{1_k} = \text{KAN}\left(\boldsymbol{\alpha}_k\right) - F(\boldsymbol{\alpha}_{k})$. Hence, using the KAN approximation Theorem~\ref{KAN aprox thm}, 
\begin{align} \label{KAN err lim}
   \vert  e_k (\textbf{0},0) \vert_\infty   \leqslant \mathcal{D}G^{-\kappa-1}, \qquad  \forall k.
\end{align}
 
\begin{Lemm}[Lipschitz  continuity of error with anomalies]\label{LMM KAT under delta}
    Consider the bounded multi-variate  function $F( \boldsymbol{\alpha}_k)$  that is Lipschitz continuous in $\boldsymbol{\alpha}_k$ \eqref{F lipschtiz} and can be expressed with  $\text{KAN}\left( \boldsymbol{\alpha}_k \right)$ network \eqref{T1 KAN}. Under nominal battery operation ($\Vert \delta_{k} \Vert,{\delta}_{1_k} \!\!= \!0$), \eqref{KAN err lim} provides the approximation error bound for $\text{KAN}\left( \boldsymbol{\alpha}_k \right)$. 
    Then, under anomalous input $\Vert \delta_{k} \Vert, \delta_{1_k} \!\!\neq \!0$, the KAN approximation error $e_k (\delta_k, \delta_{1_k})$  is Lipschitz  continuous in $\delta_k, \delta_{1_k},$ with  Lipschitz constant $L_e\!>\!0$, i.e., 
     \begin{align}
       \Big|  \Vert e_k(\delta_k,\delta_{1_k})  \Vert \! - \!\Vert e_{k}(0,0) \Vert \Big | \!\leqslant \!L_e \Big[\Vert  \delta_{1_k}  \Vert\! + \! \Vert \delta_{k} \Vert \Big], \,\, \forall k. \label{ek lipschitz}
    \end{align}
\end{Lemm}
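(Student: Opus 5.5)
The plan is to reduce the claim to the reverse triangle inequality combined with the Lipschitz property \eqref{F lipschtiz} of $F$. The key observation is that the KAN estimate $\text{KAN}(\boldsymbol{\alpha}_k)$ is the same in the nominal and the anomalous case, because the network is always fed the uncorrupted measured features $\boldsymbol{\alpha}_k$. The anomalies therefore enter only through the true core temperature. By \eqref{corrupted KAN error} and the definition of $e_k(\textbf{0},0)$, the difference of the two errors is
\begin{align*}
e_k(\delta_k,\delta_{1_k}) - e_k(\textbf{0},0) = F(\boldsymbol{\alpha}_k) - F(\boldsymbol{\alpha}_k+\delta_k) - \delta_{1_k},
\end{align*}
and the KAN terms cancel exactly. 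This cancellation is the first step.

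The second step applies the reverse triangle inequality $\big|\Vert a\Vert-\Vert b\Vert\big|\leqslant \Vert a-b\Vert$ with $a=e_k(\delta_k,\delta_{1_k})$ and $b=e_k(\textbf{0},0)$. The ordinary triangle inequality then bounds the right-hand side by $\Vert F(\boldsymbol{\alpha}_k)-F(\boldsymbol{\alpha}_k+\delta_k)\Vert + \Vert\delta_{1_k}\Vert$. Invoking \eqref{F lipschtiz} with $\Delta\boldsymbol{\alpha}_k=\delta_k$ bounds the first term by $L_F\Vert\delta_k\Vert$. Hence
\begin{align*}
\Big|\Vert e_k(\delta_k,\delta_{1_k})\Vert-\Vert e_k(\textbf{0},0)\Vert\Big| \leqslant L_F\Vert\delta_k\Vert+\Vert\delta_{1_k}\Vert \leqslant L_e\big[\Vert\delta_{1_k}\Vert+\Vert\delta_k\Vert\big],
\end{align*}
with $L_e:=\max\{L_F,1\}>0$. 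This holds for every $k$, since \eqref{F lipschtiz} is uniform in $k$.

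The main point needing care is not the inequality chain but the modelling assumption behind it. One must justify that the anomalous true core temperature can be written as $F(\boldsymbol{\alpha}_k+\delta_k)+\delta_{1_k}$ with the same $F$ and the same uniform constant $L_F$. In particular, \eqref{F lipschtiz} must hold on a region containing the perturbed inputs $\boldsymbol{\alpha}_k+\delta_k$, not only on the nominal operating range. I would address this by restricting to bounded anomalies, so that the perturbed arguments remain in a compact set on which the Lipschitz property of the battery dynamics \eqref{core temp}–\eqref{qdot dynamics} holds. I would also note that $\delta_{1_k}$ is treated as an additive term, which is why it contributes with constant $1$.

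Finally, I would remark that the nominal bound \eqref{KAN err lim} is not needed for the Lipschitz estimate itself. It becomes useful in combination with the lemma, since together they yield $\Vert e_k(\delta_k,\delta_{1_k})\Vert\leqslant \mathcal{D}G^{-\kappa-1}+L_e[\Vert\delta_{1_k}\Vert+\Vert\delta_k\Vert]$ (up to the norm equivalence between $\vert\cdot\vert_\infty$ and $\Vert\cdot\Vert$, which is trivial here since $e_k$ is scalar). That is presumably the form used later for the diagnostic guarantees.
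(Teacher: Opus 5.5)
Your proof is correct and uses the same ingredients as the paper. The KAN term cancels; the paper gets this by adding and subtracting $F(\boldsymbol{\alpha}_k)$ inside \eqref{corrupted KAN error}. Both arguments then use the triangle inequality, the Lipschitz bound \eqref{F lipschtiz} with $\Delta\boldsymbol{\alpha}_k=\delta_k$, and $L_e:=\max(1,L_F)$.

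The one real difference is how the two-sided claim is obtained. The paper first bounds $\Vert e_k(\delta_k,\delta_{1_k})\Vert \leqslant \Vert e_k(\textbf{0},0)\Vert + L_e\big[\Vert\delta_{1_k}\Vert+\Vert\delta_k\Vert\big]$ and then subtracts $\Vert e_k(\textbf{0},0)\Vert$. Strictly, that gives only the upper direction of \eqref{ek lipschitz}. The reverse direction, $\Vert e_k(\textbf{0},0)\Vert-\Vert e_k(\delta_k,\delta_{1_k})\Vert\leqslant L_e\big[\Vert\delta_{1_k}\Vert+\Vert\delta_k\Vert\big]$, is left implicit, although the symmetric argument supplies it. You apply the reverse triangle inequality to the difference $e_k(\delta_k,\delta_{1_k})-e_k(\textbf{0},0)$, which gives both directions at once. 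So your route proves the absolute-value statement exactly as written. The paper's route produces directly the one-sided form \eqref{KAN upper bound1}, which is what Theorem~\ref{thmm detection} later uses.

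Your side remarks are also accurate. The bound \eqref{KAN err lim} plays no actual role in the Lipschitz estimate, even though the paper cites it at that step. Since $e_k$ is scalar, $\vert e_k\vert_\infty=\Vert e_k\Vert$, so the norm conversion is trivial. Your restriction to bounded anomalies is a sensible precaution, but it is not needed under the paper's hypotheses: \eqref{F lipschtiz} is assumed for arbitrary $\Delta\boldsymbol{\alpha}_k$.
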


\begin{proof}
Under thermal anomaly, consider \eqref{corrupted KAN error} to add and subtract $F(\boldsymbol{\alpha}_{k})$ on the right-hand side (RHS) to obtain
\vspace{-1mm}
\begin{align}
    \Vert e_k \Vert  & = \Vert \text{KAN}\left(\boldsymbol{\alpha}_k \right) -F(\boldsymbol{\alpha}_{k}) -     \delta_{1_k} +  F(\boldsymbol{\alpha}_{k}) - F(\boldsymbol{\alpha}_{k} + \delta_{k}) \Vert. \label{added nom F}
\end{align}
Then,  using the triangle inequality, \eqref{added nom F} becomes 
\vspace{-1mm}
\begin{align}
    \Vert e_k \Vert & \leqslant  \Vert  \text{KAN}\left(\boldsymbol{\alpha}_k \right) -F(\boldsymbol{\alpha}_{k}) \Vert \,\,    + \Vert   \delta_{1_k}  \Vert  + \Vert F(\boldsymbol{\alpha}_{k} )  - F(\boldsymbol{\alpha}_{k} + \delta_{k}) \Vert.
\end{align}
Next, using Lipschitz continuity of $F$ and \eqref{KAN err lim} yields
\begin{align} 
    \Vert e_k \Vert & \leqslant  \Vert  e_k (\textbf{0},0) \Vert \, + \Vert   \delta_{1_k}  \Vert + \, L_F \Vert \delta_{k} \Vert. \label{KAN upper bound}
\end{align}
Defining $L_e := \max(1 , L_F)$, we obtain
\begin{align} 
    \Vert e_k \Vert & \leqslant  \Vert  e_k (\textbf{0},0) \Vert \, + L_e \left(\Vert   \delta_{1_k}  \Vert + \, \Vert \delta_{k} \Vert\right). \label{KAN upper bound1}
\end{align}
subtracting $\Vert  e_k (\textbf{0},0) \Vert$ on each side of the inequality and  we prove \eqref{ek lipschitz}.

\end{proof}

\begin{remm}
    Lemma \ref{LMM KAT under delta} implies that the error $e_k (\delta_k, \delta_{1_k})$ in KAN estimate is Lipschitz continuous in $\delta_k, \delta_{1_k}$, and thus, also  absolutely continuous in $\delta_k, \delta_{1_k}$. This regularity of $e_k (\delta_k, \delta_{1_k})$ in $\delta_k, \delta_{1_k}$ implies that the KAN-based core temperature estimation $\mathbb{T}_1$ will vary regularly with thermal anomalies. Thus, the KAN estimate can be utilized reliably as a measurement proxy even under anomalous thermal conditions for the following Koopman module. This result is strong as it guarantees the KAN network behavior under anomalous inputs that are beyond the nominal training set. 
\end{remm}

\subsection{Koopman-based prediction}
We use KO to generate the core  and surface temperature predictions  $\widehat{\mathbb{T}}_1$ and $\widehat{T}_2$ using KAN estimate and available data as
$\mathbb{y}_k = \begin{bmatrix}
    \mathbb{T}_{1_k} & T_{2_k}
\end{bmatrix}^T$ and $u_k = \begin{bmatrix}
    I_k & T_{\infty_k}
\end{bmatrix}^T$. Then, adopting the delay-embedding technique discussed in Section~\ref{KO_preli} and using \eqref{ko opt}, we can first learn the Koopman linear model over a sliding learning window of length $W_l$ and then generate the predictions $\widehat{\mathbb{y}}$ and the detection residual $\mathbb{r}_k$ over the following prediction window of length $W_p$ as:
\vspace{-1mm}
\begin{align}\label{z KAN}
    \mathbb{z}_{k+1} = \mathbb{A} \mathbb{z}_k + \mathbb{B} u_k; \,\,\, \widehat{\mathbb{y}}_k = \mathbb{C} \mathbb{z}_k, \,\,\, \mathbb{r}_k = \Vert \mathbb{y}_k - \widehat{\mathbb{y}}_k \Vert.
\end{align}

\subsection{Diagnostic guarantees}
\noindent To obtain the diagnostic guarantees for the residual $\mathbb{r}_k$, let us first consider a hypothetical Koopman model that can access the true core temperature measurement $T_1$ such that ${y}_k \!=\! \begin{bmatrix}
    {T}_{1_k} \!& \!T_{2_k}
\end{bmatrix}^T$ and the prediction $\widehat{y}$,  residual $r_k$ are defined as
\vspace{-3mm}
\begin{align}\label{z true T1}
    {z}_{k+1} = {A} {z}_k + {B} u_k, \, \,\, \widehat{{y}}_k = {C} {z}_k, \,\,\,  {r}_k = \Vert {y}_k - \widehat{{y}}_k \Vert.
\end{align}
Our previous work in \cite{ghosh2024koopman1} shows that in the presence of anomalies, for a Koopman model that learns from true system measurement, $r_k$  will cross the threshold reliably to detect the anomalous input. Hence, we assume that for the hypothetical model described above, there exist a constant $M_1 \!\in \!\mathbb{R}^+$ and a small practical relaxation constant $\epsilon_M\!>\!0$, such that 
\begin{align}\label{r m att}
    \Vert y_k - C z_k \Vert= r_k  \geqslant M_1 \left[\Vert \delta_k\Vert + \Vert \delta_{1_k}\Vert\right] - \epsilon_M, \quad \forall k.
\end{align}
This sensitivity constant $M_1$ is a design parameter, and ensures detection of all attacks that satisfy $ \Vert \delta_k\Vert + \Vert \delta_{1_k}\Vert \geqslant \frac{\epsilon_M}{M_1}.$ This bound shows that higher value of sensitivity $M_1$ ensures the detection of smaller-valued $\delta_k,\delta_{1_k}$. 
Furthermore, under no thermal anomalies, the hypothetical residual  must remain below the threshold to avoid false detection, i.e. 
\begin{align}
    r_k  \! \leqslant \!\epsilon_M, \qquad \text{when,}\,\, \Vert \delta_k \Vert,\delta_{1_k}=0. \label{nom r bound}
\end{align}
     \noindent
 Next, we will present the two main theorems for our KAN-Koopman framework using these conditions on $r_k$.

\begin{thmm}[Upper bound on residual under no anomaly] \label{thmm threshold}
      Consider the proposed KAN-Koopman model \eqref{z KAN} learned using the KAN estimate $\mathbb{T}_1$  and the hypothetical Koopman model \eqref{z true T1}  using true measurement $T_1$. Then, during nominal battery operation ($\Vert \delta_{k} \Vert, \delta_{1_k} \!= \!0$),
    the diagnostic residual $\mathbb{r}_k$ is bounded for $\beta_1, \beta_2, \varsigma \in \mathbb{R}^+$ as: 
    \begin{align}\label{nom r lemm}
      \mathbb{r}_k    \leqslant  \epsilon_M  +    \sqrt{2} \,\varsigma\,\mathcal{D} G^{-\kappa-1},\,\,\, \forall k.
  \end{align}
\end{thmm}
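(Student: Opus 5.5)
We bound the KAN-Koopman residual $\mathbb{r}_k$ by the hypothetical residual $r_k$ plus a term controlled by the KAN estimation error, and then invoke \eqref{nom r bound} and \eqref{KAN err lim}. First we write the measurement mismatch as $\mathbb{y}_k - y_k = \begin{bmatrix} e_k(\mathbf{0},0) & 0\end{bmatrix}^T$, since only the first component is replaced by the KAN estimate. Adding and subtracting $y_k$ and $\widehat{y}_k$ inside $\mathbb{r}_k$ and applying the triangle inequality gives
\begin{align*}
\mathbb{r}_k = \Vert \mathbb{y}_k - \widehat{\mathbb{y}}_k\Vert \leqslant \Vert y_k - \widehat{y}_k\Vert + \Vert \mathbb{y}_k - y_k\Vert + \Vert \widehat{y}_k - \widehat{\mathbb{y}}_k\Vert .
\end{align*}
Under nominal operation, \eqref{nom r bound} bounds the first term by $\epsilon_M$. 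The second term equals $\vert e_k(\mathbf{0},0)\vert \leqslant \mathcal{D}G^{-\kappa-1}$ by \eqref{KAN err lim}.

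The main obstacle is the third term, the difference between the two Koopman predictions $\mathbb{C}\mathbb{z}_k$ and $C z_k$. Both models are fitted by the least-squares problems \eqref{ko opt} on the same window and the same inputs $U_b$; they differ only in that one core-temperature row of the Hankel data is perturbed by $e_k(\mathbf{0},0)$ entrywise. We will argue that the solution of \eqref{ko opt} (a pseudo-inverse of $\Upsilon$) and the subsequent $W_p$-step propagation of \eqref{z KAN} are Lipschitz in the data over a bounded prediction window; boundedness of the predictions and of the KEFs over $W_p$ was assumed in Section~\ref{KO_preli}. This gives a data-dependent sensitivity constant $\beta_1$ with $\Vert \widehat{y}_k - \widehat{\mathbb{y}}_k\Vert \leqslant \beta_1 \max_{j}\vert e_j(\mathbf{0},0)\vert_\infty$, where the maximum runs over the window. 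If a rigorous perturbation bound for the pseudo-inverse proves delicate, we will instead state $\beta_1$ as an assumed finite gain, in the same spirit as the assumed constant $M_1$.

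We will apply the vector version of this bound to the two-component output, so that the $l_2$ norm of an error whose components are each bounded by the $l_\infty$ estimate $\mathcal{D}G^{-\kappa-1}$ introduces the factor $\sqrt{2}$. Setting $\beta_2$ for the direct-measurement contribution and collecting constants as $\varsigma := (\beta_1+\beta_2)/\sqrt{2}$, or a similar combination absorbing $\beta_1,\beta_2$, yields $\mathbb{r}_k \leqslant \epsilon_M + \sqrt{2}\,\varsigma\,\mathcal{D}G^{-\kappa-1}$ for all $k$, which is \eqref{nom r lemm}.
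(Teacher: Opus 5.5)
Your proposal is correct and takes essentially the paper's route. You compare against the hypothetical model with the triangle inequality, use $r_k\leqslant\epsilon_M$ and the KAN error for $\Vert\mathbb{y}_k-y_k\Vert$, and control the prediction gap by least-squares sensitivity. The paper makes that last step explicit: it splits the gap as $\Vert C\Vert\,\Vert\Delta z_k\Vert+\Vert\Delta C\Vert\,\Vert\mathbb{z}_k\Vert$ and expands the normal-equation solution as $\mathbb{C}=C(I-N)^{-1}-\mu$, which tacitly needs $X_bX_b^T$ invertible and the perturbation small, exactly as your local Lipschitz claim does.

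Your bound via $\max_j\vert e_j(\mathbf{0},0)\vert$ over the window is actually cleaner than the paper's assumption $\beta_1\Vert\xi_k\Vert\geqslant W_l\max_j\Vert\xi_j\Vert$, because \eqref{KAN err lim} holds for every $k$. The trade-off is that the paper's current-time form is what it reuses with Lemma 1 in Theorem 3. In both versions the $\sqrt{2}$ is harmless slack, since $\xi_k$ has only one nonzero scalar entry.
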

\begin{proof}
    Let us expand the residual $\mathbb{r}_k$ as:
\begin{align} 
    \mathbb{r}_k = \Vert \mathbb{y}_k - \mathbb{C} \mathbb{z}_k\Vert& = \Vert (\mathbb{y}_{{k}} - y_k) +( y_k - C z_k )  + (C z_k-C \mathbb{z}_k )+ (C \mathbb{z}_k -  \mathbb{C} \mathbb{z}_k  ) \Vert. \label{r expand}
\end{align}
Defining $\Delta C := \mathbb{C} - C$, we  use the triangle inequality on RHS of \eqref{r expand} and substitute \eqref{z true T1} to obtain
\begin{align}\label{r 4 parts}
    \mathbb{r}_k   & \leqslant  r_k + \Vert \mathbb{y}_{{k}} - y_k \Vert +\Vert C \Vert \,\Vert z_k -  \mathbb{z}_k \Vert + \Vert \Delta C \mathbb{z}_k \Vert.
\end{align}
Let us denote  $\xi_k = \mathbb{y}_{{k}} - y_k $. Then, $\xi_k = \begin{bmatrix}
    e_k & 0
\end{bmatrix}^T$ since the surface temperature $T_2$ is known. Hence, we obtain
\begin{align}
    \Vert \xi_k \Vert = \Vert \mathbb{y}_{{k}} - y_k \Vert = \Vert e_k \Vert. \label{xi as e}
\end{align}
Next, the KMs can be found analytically using \eqref{ko opt} as 
\begin{align}\label{KMs}
    \mathbb{C} = \mathbb{Y}_b \mathbb{X}_b^T \left(\mathbb{X}_b\mathbb{X}_b^T \right)^{-1}, \qquad {C} = {Y}_b {X}_b^T \left({X}_b {X}_b^T \right)^{-1}.
\end{align}
Here, $\mathbb{Y}_b$ and ${Y}_b$ are the delay-embedded measurements  
for the proposed and hypothetical Koopman models, respectively. Additionally, $\mathbb{z}_k$ and $z_k$ are the single columns in the Hankel matrices $\mathbb{X}_b$ and $X_b$, respectively. These matrices $\mathbb{X}_b$ and $X_b$ contain, respectively, a time-delayed sequence of $\mathbb{y}, u$ data and $y,u$ data from previous $k-W_l$ observations. Thus,  $\Delta z_k = \mathbb{z}_k-z_k$, $\Xi = \mathbb{Y}_b - Y_b$, and $\zeta = \mathbb{X}_b - X_b$ contain $0$ and $\xi_j$ as their entries for $j \in  \mathcal{J} \coloneqq \{k, \cdots,k-W_l \}$. Let us now define a constant $\beta_1 >0$ such that $ \beta_1 \Vert \xi_k \Vert \geqslant W_l \max_{j \in \mathcal{J} } (\Vert \xi_j\Vert)$ and then, we can obtain the bounds as 
\begin{align}\label{bound small z}
    & \Vert \Delta z_k \Vert  \leqslant  \beta_1 \Vert \xi_k \Vert, \,\,\, \Vert \Xi \Vert \leqslant  \beta_1 \Vert \xi_k \Vert, \,\,\, \Vert\zeta\Vert \leqslant  \beta_1 \Vert \xi_k \Vert.
\end{align}
Now, substituting $\mathbb{Y}_b = Y_b - \Xi$ and $\mathbb{X}_b = X_b - \zeta$ in the first equation of \eqref{KMs},  we obtain
\begin{align}
    \mathbb{C} &=  [Y_b - \Xi][X_b-\zeta]^T\Big([X_b-\zeta][X_b-\zeta]^T\Big)^{-1}\\
    & = [Y_b - \Xi] [X_b-\zeta]^T \, \Big(X_b X_b^T \!- \eta \Big)^{-1}, 
\end{align}
where $\eta \!= \!X_b\zeta^T \!+ \!\zeta X_b^T\! - \!\zeta\zeta^T$. Now, taking $l_2$ norm and using the triangle inequality, we can write
$
    \Vert \eta \Vert \!\leqslant\! 2\Vert X_b \Vert \Vert \zeta \Vert \!+\! \Vert \zeta \Vert^2 .
$
Next, using the boundness of the Hankel matrix $X_b$ and $\zeta$ from \eqref{bound small z},
we can find a constant $\beta_3 \in \mathbb{R}^+$ such that  we get $\Vert \eta \Vert \leqslant \beta_3 \Vert \xi_k\Vert$.
Next, multiplying  $\left(X_b X_b^T\right)^{-1}X_b X_b^T$ with $(X_b X_b^T - \eta )^{-1}$  from left and merging $X_b X_b^T$, we get
\begin{align}
    \mathbb{C} &=  [Y_b - \Xi][X_b-\zeta]^T\!\left(X_b X_b^T\right)^{-1}\!\Big(I \!- N \Big)^{-1}\nonumber \\
    & = C (I-N)^{-1} - {\mu}, \quad \text{where} \,\, N   = \!\left(X_b X_b^T\right)^{-1}\!\eta,\label{c mid}\\
    \mu  &= \left[Y_b \zeta^T\! +\! \Xi X_b^T \! - \!\Xi \zeta^T\right]\!\left(X_b X_b^T\right)^{-1} (I-N)^{-1}. \label{mu}
\end{align}
Now, taking $l_2$ norm on $\mu$ and considering the boundness of $X_b$ and $Y_b$, we can use \eqref{bound small z}  to obtain a $\beta_4 \in \mathbb{R}^+$ such that $\Vert \mu \Vert \!\leqslant \!\beta_4 \Vert \xi_k \Vert$.
Next, subtracting $C$ on both sides of \eqref{c mid}, using definition of $N$ \eqref{c mid} after simplification,  we can obtain
\begin{align}
    \Delta C & = C  \left(X_b X_b^T\right)^{-1}\eta \left(I - \left(X_b X_b^T\right)^{-1}\eta \right)^{-1} - \mu.
\end{align}
Notice that $\Delta C$ contains $\eta$, $\mu$, and bounded terms. Thus, using the bounds on $\eta$, $\mu$, we can obtain $\beta_2 \in \mathbb{R}^+$ such that
\vspace{-1mm}
\begin{align}
    \Vert \Delta C \Vert \leqslant \beta_2  \Vert \xi_k \Vert . \label{del c bound}
\end{align}
Finally, using \eqref{xi as e}, \eqref{bound small z}, and \eqref{del c bound}, we can rewrite \eqref{r 4 parts}  as:
\vspace{-1mm}
  \begin{align}\label{nom r eVert}
      \mathbb{r}_k   &  \leqslant r_k  +  \left(1 + \varsigma_k   \right) \Vert e_k \Vert,
  \end{align}
 where $0\leqslant\varsigma_k = \beta_1 \Vert C \Vert  + \beta_2 \Vert \mathbb{z}_k \Vert$. Now, since the Koopman prediction $Cz$ and $\mathbb{C}\mathbb{z}$ remain bounded over the prediction window of length $W_p$, we can find a constant $\varsigma \in \mathbb{R}^+$ such that $1\leqslant 1 + \varsigma_k \leqslant \varsigma <\infty,\,\, \forall k$. Hence, in the absence of thermal anomalies, i.e., $\Vert \delta_{k} \Vert, \delta_{1_k} = 0$, \eqref{nom r eVert} becomes
 \vspace{-1mm}
    \begin{align}\label{nom r epsi}
      \mathbb{r}_k   &  \leqslant r_k  +  \varsigma \,\Vert  e_k (\textbf{0},0) \Vert .
  \end{align}
  Additionally, from \eqref{KAN err lim}, and using definitions of  $l_2$ and $l_\infty$ norm, we can get 
  \begin{align} \label{ek l2 norm}
      \Vert  e_k (\textbf{0},0) \Vert \leqslant \sqrt{2} \vert  e_k (\textbf{0},0) \vert_\infty \leqslant \sqrt{2} \mathcal{D}G^{-\kappa-1}.
  \end{align} 
  Finally, substituting this bound for $\Vert  e_k (\textbf{0},0) \Vert$ and $r_k$  \eqref{nom r bound} in \eqref{nom r epsi}, we can obtain \eqref{nom r lemm}. This completes the proof.
\end{proof}
\noindent
Theorem~\ref{thmm threshold} shows that during nominal battery operation ($\Vert \delta_{k} \Vert, \delta_{1_k} =0$), the residual $\mathbb{r}_k$ remains close to zero since $r_k$ is bounded by $\epsilon_M$ and the KAN approximation can be made arbitrarily small depending on the network structure i.e. with higher grid $G$ and activation function order $\kappa$. 

  \begin{thmm}[KAN-Koopman guarantees for detection] \label{thmm detection}
      Consider the proposed KAN-Koopman model \eqref{z KAN} learned using the KAN estimate $\mathbb{T}_1$ and the hypothetical Koopman model \eqref{z true T1} learned using true measurement $T_1$. 
      Then, for thermal anomalies $\delta_{k}$ and $ \delta_{1_k}$ satisfying,
      \vspace{-1mm}
    \begin{align}
        \Vert  \delta_{1_k}  \Vert +  \Vert \delta_{k} \Vert \geq \epsilon_M + \sqrt{2} \, \mathcal{D} G^{-\kappa-1}, \label{delta norm}
    \end{align}
      the residual KAN-Koopman $\mathbb{r}_k$ from \eqref{z KAN} has the following lower bound for positive constant $M\in \mathbb{R}^+$:
      \vspace{-1mm}
    \begin{align} 
      \mathbb{r}_k   &  \geqslant M \Big[\Vert  \delta_{1_k}  \Vert +  \Vert \delta_{k} \Vert\Big] - \epsilon_M, \quad \forall k. \label{att r thmm}
    \end{align}

  \end{thmm}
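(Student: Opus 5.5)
The plan is to mirror the decomposition used in the proof of Theorem~\ref{thmm threshold}, but with the inequalities reversed. We write
\[
y_k - Cz_k = (y_k-\mathbb{y}_k) + (\mathbb{y}_k - \mathbb{C}\mathbb{z}_k) + (\mathbb{C}\mathbb{z}_k - C\mathbb{z}_k) + (C\mathbb{z}_k - Cz_k).
\]
The reverse triangle inequality then gives
\[
\mathbb{r}_k \geqslant r_k - \Vert \xi_k\Vert - \Vert C\Vert\,\Vert \Delta z_k\Vert - \Vert \Delta C\Vert\,\Vert \mathbb{z}_k\Vert .
\]
Everything needed to bound the subtracted terms was already derived in the proof of Theorem~\ref{thmm threshold}. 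By \eqref{xi as e}, \eqref{bound small z}, \eqref{del c bound} and the definition of $\varsigma$, we have $\mathbb{r}_k \geqslant r_k - \varsigma\Vert e_k(\delta_k,\delta_{1_k})\Vert$. These perturbation bounds never used the absence of anomalies; they only used $\xi_k=[e_k\;0]^T$ and boundedness of the Hankel matrices over the windows. They therefore carry over verbatim to the anomalous case.

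Next, we lower bound $r_k$ by the hypothetical-model sensitivity assumption \eqref{r m att}, $r_k \geqslant M_1[\Vert\delta_k\Vert+\Vert\delta_{1_k}\Vert]-\epsilon_M$. We upper bound $\Vert e_k(\delta_k,\delta_{1_k})\Vert$ using Lemma~\ref{LMM KAT under delta} in its one-sided form \eqref{KAN upper bound1}, namely $\Vert e_k\Vert \leqslant \Vert e_k(\mathbf{0},0)\Vert + L_e[\Vert\delta_{1_k}\Vert+\Vert\delta_k\Vert]$. We then apply \eqref{ek l2 norm} to get $\Vert e_k(\mathbf{0},0)\Vert\leqslant \sqrt2\,\mathcal{D}G^{-\kappa-1}$. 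Writing $S_k:=\Vert\delta_{1_k}\Vert+\Vert\delta_k\Vert$, combining yields
\[
\mathbb{r}_k \geqslant (M_1-\varsigma L_e)S_k - \varsigma\sqrt2\,\mathcal{D}G^{-\kappa-1} - \epsilon_M .
\]

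The last step absorbs the constant KAN-error term into the linear term using hypothesis \eqref{delta norm}. Since $S_k \geqslant \epsilon_M+\sqrt2\mathcal{D}G^{-\kappa-1}\geqslant \sqrt2\mathcal{D}G^{-\kappa-1}$, we get $\varsigma\sqrt2\mathcal{D}G^{-\kappa-1}\leqslant \varsigma S_k$. Hence
\[
\mathbb{r}_k \geqslant (M_1-\varsigma L_e-\varsigma)S_k-\epsilon_M ,
\]
and we set $M:=M_1-\varsigma(L_e+1)$.

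The main obstacle is guaranteeing $M>0$, which is required by the statement. I would handle this through the design-parameter interpretation of $M_1$ given after \eqref{r m att}. I would state the standing requirement $M_1>\varsigma(1+L_e)$ as a condition on the sensitivity, which is achievable since $M_1$ is a design choice. A second subtlety is that $\varsigma$ was defined via boundedness of predictions over the window, and under anomalies the constants $\beta_1,\beta_2$ and $\Vert\mathbb{z}_k\Vert$ may be larger. I would note that the same boundedness assumption on $W_p$-window predictions still provides a finite $\varsigma$, possibly redefined as the supremum over the anomalous regime.
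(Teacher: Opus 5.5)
Your proof is correct and follows the paper's route: the same four-term decomposition with the reverse triangle inequality, the $\varsigma\Vert e_k\Vert$ perturbation bound carried over from Theorem~\ref{thmm threshold}, Lemma~\ref{LMM KAT under delta} with the $\sqrt{2}$ norm conversion, absorption of the KAN-error term via \eqref{delta norm}, and the same constant $M = M_1-\varsigma(1+L_e)$ under the design choice $M_1>M_2$. The one difference is that you skip the paper's intermediate inequality \eqref{enable_TE}, which is harmless because the reverse triangle inequality holds unconditionally, so you only need the weaker consequence $S_k \geqslant \sqrt{2}\,\mathcal{D}G^{-\kappa-1}$ of the hypothesis.
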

  \begin{proof}
  Using the definitions provided in Theorem 2, let us first rewrite the terms in \eqref{r expand} as
  \begin{align} 
    \mathbb{r}_k =  \Vert (y_k - C z_k) - ( C \Delta \mathbb{z}_k +\Delta C \mathbb{z}_k - \xi_k)  \Vert. \label{r th2}
\end{align}
To prove this theorem, we will first show that 
 \begin{align} \label{enable_TE}
         \Vert  C \Delta \mathbb{z}_k +\Delta C \mathbb{z}_k - \xi_k   \Vert \leqslant \Vert y_k - C z_k \Vert.
     \end{align}
and then apply the inverse triangle inequality on \eqref{r th2} as
       \begin{align}\label{att r eVert}
      \mathbb{r}_k   &  \geqslant \Vert y_k - C z_k \Vert -  \Vert  C \Delta \mathbb{z}_k +\Delta C \mathbb{z}_k - \xi_k   \Vert.
  \end{align}
To achieve this goal, we will first use the triangle inequality on the second term to write:
\begin{align}
    &\Vert  C \Delta \mathbb{z}_k +\Delta C \mathbb{z}_k - \xi_k   \Vert \leqslant  \Vert  C  \Vert  \, \Vert \Delta \mathbb{z}_k  \Vert  + \Vert \Delta C \Vert \, \Vert \mathbb{z}_k \Vert +  \Vert \xi_k  \Vert.
\end{align}
Next, using the bounds from \eqref{bound small z} and \eqref{del c bound}, the above inequality becomes 
    $\Vert  C \Delta \mathbb{z}_k +\Delta C \mathbb{z}_k - \xi_k   \Vert \leqslant  \varsigma \, \Vert e_k \Vert.$ 
Subsequently, we first use \eqref{KAN upper bound1} from  Lemma~1 and then \eqref{ek l2 norm} from Theorem~2 to this inequality to yield
\begin{align}
    \Vert  C \Delta \mathbb{z}_k +\Delta C \mathbb{z}_k - \xi_k   \Vert & \leqslant  \sqrt{2}\, \varsigma \, \mathcal{D} G^{-\kappa-1} + \varsigma \,  L_e \Big[\Vert  \delta_{1_k}  \Vert +  \Vert \delta_{k} \Vert \Big]. \label{del o}
\end{align}
     Now,  for thermal anomalies $\delta_{k}$ and $ \delta_{1_k}$ satisfying \eqref{delta norm}, 
     \begin{align}\nonumber
           \sqrt{2}\, \varsigma \mathcal{D} G^{-\kappa-1} &\leqslant  \varsigma \Big[\Vert  \delta_{1_k}  \Vert +  \Vert \delta_{k} \Vert \Big]- \epsilon_M \varsigma,\\
            & \leqslant  \varsigma \Big[\Vert  \delta_{1_k}  \Vert +  \Vert \delta_{k} \Vert \Big]-\epsilon_M , \,  \text{(as $\varsigma \geqslant 1$).} 
     \end{align}
      Substituting the above inequality in \eqref{del o} and defining $M_2=\varsigma (1+L_e)$, yields
     \begin{align}
         \Vert  C \Delta \mathbb{z}_k+\Delta C \mathbb{z}_k - \xi_k   \Vert&\leqslant M_2 \Big[\Vert  \delta_{1_k}  \Vert +  \Vert \delta_{k} \Vert \Big]-\epsilon_M,\label{M2 delta} \\
         &\leqslant M_2 \Big[\Vert  \delta_{1_k}  \Vert +  \Vert \delta_{k} \Vert \Big] \label{M2_delta_2},
     \end{align}
     as $\epsilon_M>0$. Now, we can choose the sensitivity parameter in \eqref{r m att} as $M_1 >M_2$ to claim that $\Vert y_k - C z_k \Vert = r_k \geqslant M_2 \left[\Vert \delta_k\Vert + \Vert \delta_{1_k}\Vert\right] - \epsilon_M$. 
   Combining this inequality and \eqref{M2 delta}, we obtain the desired inequality defined in \eqref{enable_TE}.  
Lastly, using \eqref{r m att} and \eqref{M2_delta_2}  in the two terms of the inequality \eqref{att r eVert} respectively, we obtain \eqref{att r thmm} by defining $M=M_1-M_2$. This completes the proof.
  \end{proof}

  \noindent
  Theorem \ref{thmm detection} implies that if the sensitivity parameter $M_1$ is chosen as $M_1 \geqslant M_2$,
  the diagnostic residual $\mathbb{r}_k$ can ensure the reliable detection of thermal anomalies $\delta_{1_k}$ and $\delta_k$ that satisfy \eqref{delta norm}, where \eqref{delta norm} indicates sensitivity of the method to thermal anomalies with magnitude higher than the bound of the KAN approximation error $ e_k(\textbf{0},0)$ and the practical relaxation factor $\epsilon_M$. Thus, if the bound for  $ e_k(\textbf{0},0)$, i.e., $\sqrt{2} \mathcal{D} G^{-\kappa-1}$ is kept small enough with the appropriate network structure (grid $G$ and spline order $\kappa$), the proposed KAN-Koopman algorithm can reliably detect low-magnitude thermal anomalies in battery. 

\section{Simulation Results}\label{sim_result}
 We consider a $2.3Ah$  cylindrical $LiFePO_4-LiC_6$ battery cell for our case studies and adopt the model parameters from \cite{vyas2024input}. 
 We consider a sampling frequency of $100 Hz$ and a zero-mean Gaussian measurement noise of $\pm 0.05^\circ$ for the $T_1, T_2,$ and $T_\infty$ measurements.
Next, we provide details of the KAN-Koopman model and present two thermal anomaly scenarios to evaluate our proposed algorithm's performance. 
\subsection{Details of KAN-Koopman implementation}

\textbf{KAN architecture:} We consider a two-layered KAN network with a network configuration of $[\![4, 3, 1 ]\!]$: input layer with width 4 (four features $T_2, T_{\infty}, I,\dot{Q}_c $), followed by a KAN layer of width 3, and the final KAN layer of width 1 (single output  $\mathbb{T}_1$). 
Furthermore, we consider the grid $G=5$ and spline order $\kappa =3$ for the KAN model. 

\textbf{Koopman parameters:} We consider a learning window of length $3000$ data-points ($30s$) with delay-embedding length of $2100$ to generate predictions for the next $500$ observations ($5s$).
Furthermore, we choose a threshold of $0.03$ based on the nominal fluctuations in residual $\mathbb{r}$ under $\Vert \delta_{k} \Vert, \delta_{1_k} \!= \!0$. We have used the average of generated residuals over a moving window of 3000 to reduce the setting and resetting of the attack detection flag during threshold crossings.

\subsection{Detection of incipient thermal fault}
In this scenario, we consider that the battery is charging at $1C$ rate and an incipient thermal fault $\delta_1$ is injected from 800s. Due to the slow-progressing nature of the injected fault, $T_1$ and $T_2$ do not exhibit any abrupt growth in temperatures, as shown in the top plot of Fig~\ref{fig:Qfault}. 
The second plot of Fig~\ref{fig:Qfault}  shows the charging current, and the third plot illustrates the increased heat generation in the battery after the incipient fault injection. Under such thermal anomaly, our diagnostic residual $\mathbb{r}$ crosses the threshold at 1112s as shown in the fourth plot of Fig~\ref{fig:Qfault}, resulting in a reliable detection of the thermal anomaly after 362s of the fault occurrence (shown by setting of the anomaly flag in the last plot). We also compare this performance  with a baseline Koopman-only algorithm that solely relies on $T_2$ measurement for diagnosis. The last plot of Fig~\ref{fig:Qfault} shows that the baseline method takes 112s longer than the proposed method to generate the anomaly flag. Thus, the proposed method exhibits $24\%$ faster detection of  thermal anomalies in this scenario. 
\begin{figure}[h!]
    \centering
    \includegraphics[width=0.5\linewidth]{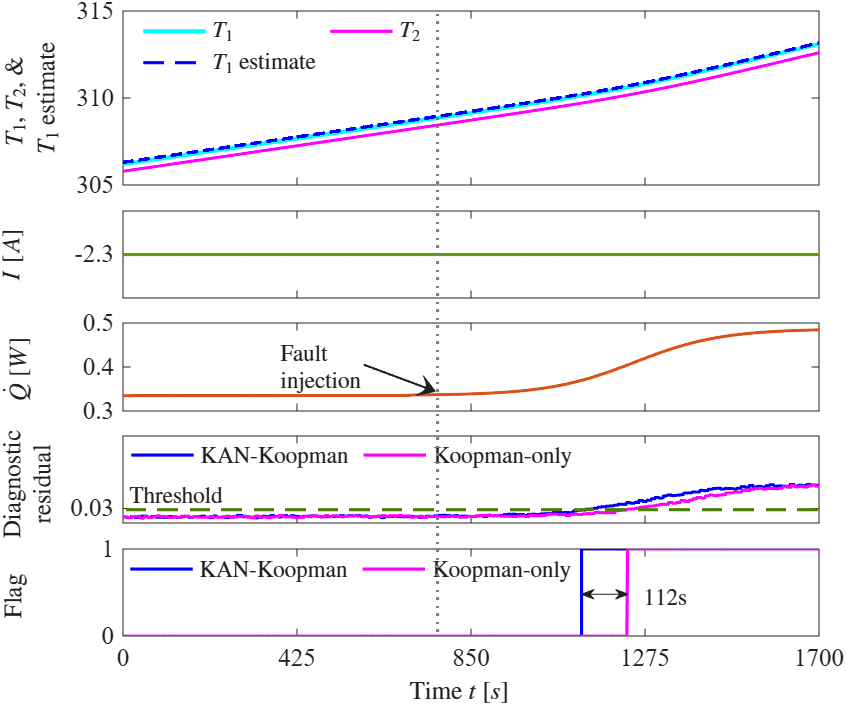}
    \caption{Under an incipient thermal fault, top plot shows $T_1$, $T_2$, and $\mathbb{T}_1$; second plot shows  $I$; third plot shows  $\dot{Q}$;  fourth and fifth plots, respectively, shows the diagnostic residuals  and the thermal anomaly flag generated by proposed and the baseline Koopman algorithm.}
    \label{fig:Qfault}
\end{figure}

\section{Conclusion} \label{conclusion}
In this paper, we propose a model-free KAN-Koopman diagnostic algorithm for the detection of battery thermal anomalies that can be induced from both physical faults and cyberattacks. The proposed method adopts a lightweight KAN network that estimates the core temperature, and next, a Koopman model is learned online using the KAN estimation along with the limited available data to generate  predictions for the battery core and surface temperatures. This online learning approach leads to improved adaptability with changes in battery dynamics, while the KAN–Koopman integrated structure ensures rapid detection of battery thermal anomalies. The diagnostic residual is generated as the error in the Koopman predictions for the KAN-based  core temperature estimation and surface temperature measurement.  Furthermore, we derive analytical conditions for diagnostic guarantees on our KAN-Koopman detection scheme. We compare the performance of our model with baseline Koopman-only model, that only utilizes surface temperature for diagnostic. Our comparison result shows that the proposed KAN-Koopman method exhibits $24\%$ faster detection of an incipient thermal fault induced thermal anomalies compared to the Koopman-only model.


\bibstyle{arxiv}
\bibliography{ref1.bib}

\end{document}